\newtheorem{theorem}{Theorem}
\newtheorem{proposition}[theorem]{Proposition}
\newtheorem{definition}[theorem]{Definition}
\newtheorem{conjecture}[theorem]{Conjecture}
\newcommand{\ud}{\mathrm{d}}
\begin{document}
%%-----------------------------
%%      the top matter
%%-----------------------------
\title{Net Reproduction Functions for Nonlinear Structured Population Models} % At most 5 thanks
\author{J\'{o}zsef Z. Farkas}\address{Division of Computing Science and Mathematics, University of Stirling, Stirling FK9 4LA,UK; e-mail: jozsef.farkas@stir.ac.uk}
\begin{abstract} The goal of this note is to present a general approach to define the net reproduction function for a large class of nonlinear physiologically structured population models. In particular, we are going to show that this can be achieved in a natural way by reformulating a nonlinear problem as a family of linear ones; each of the linear problems describing the evolution of the population in a different, but constant environment. The reformulation of a nonlinear population model as a family of linear ones is a new approach, and provides an elegant way to study qualitative questions, for example the existence of positive steady states. To define the net reproduction number for any fixed (constant) environment, i.e. for the linear models, we use a fairly recent spectral theoretic result, which characterizes the connection between the spectral bound of an unbounded operator and the spectral radius of a corresponding bounded operator. For nonlinear models, varying the environment naturally leads to a net reproduction function. \end{abstract}
%
%\begin{resume} ... \end{resume}
%
\subjclass{92D25, 35B35}
\keywords{Physiologically structured populations, net reproduction function, positive operators.}
\maketitle
%%-----------------------------
%%      your text
%%-----------------------------
\section{Prologue}

Physiologically structured population models have been developed and studied extensively in the past decades, see for example the monographs \cite{CUS,I,MD,W} for an in-depth introduction and review of the topic. Our ge\-ne\-ral research agenda, to relate the stability of steady states of physiologically structured population models to biologically meaningful quantities, allowed us to obtain a number of interesting results, see e.g. \cite{F,FH,FH2,FH4,FH3}. In particular, as we have seen for example in \cite{AF,F,FH,FH2,FH4,FH3}, the existence and stability of steady states of nonlinear structured population models can often be characterised using an appropriately defined net reproduction function. In fact we note that stability questions of nonlinear matrix population models were already investigated in the same spirit, see for example the paper \cite{YicangCushing}. 
However, we would like to emphasize that previously net reproduction functions were defined for concrete nonlinear models on an `ad hoc' basis, but typically via analysing the corresponding steady state equations, see e.g. \cite{AF,F,FGH,FH,FH2,FH4,FH3} for more details. It is our goal in this paper to present a general framework, which is applicable to different classes of nonlinear models.

The existence of positive steady states is an important and interesting question, when studying nonlinear po\-pu\-la\-tion models; and to establish the existence of non-trivial steady states is often challenging. This is especially the case for models formulated as infinite dimensional dynamical systems, for example delay equations, partial differential equations or integro-differential equations, see e.g. \cite{FarkasMorozov, FarkasMorozov2}. To overcome some of the difficulties, we have developed a very general framework to treat steady state problems of some classes of nonlinear partial differential or partial integro-differential equations, see \cite{CF,CF2,CF3}. The power of the method we developed becomes apparent for models with so-called infinite dimensional nonlinearities, when in fact the existence of positive steady states cannot be completely characterised by an appropriately defined net reproduction function. More precisely, for models with infinite dimensional nonlinearities, it turns out (as we will also see later) that one can still define a net reproduction functional $\mathcal{R}$, which will necessarily take the value $1$ at any positive steady state; but the condition $\mathcal{R}(E_*)=1$, where $E_*$ stands for the steady environment (for the concept of environment see e.g. \cite{DGM}), does not necessarily suffice to guarantee the existence of a positive steady state. This phenomenon was already noted for example in \cite{FH3} for a hierarchic size-structured population model (incorporating infinite dimensional nonlinearity). It also turns out, as we are going to show later, that the approach we developed in \cite{CF2} to study the existence of steady states of nonlinear models, can be utilised to define the net reproduction function/functional for a variety of nonlinear population models. We also note that our approach to consider a nonlinear population model, and then recast it in the form of a parametrised family of linear problems, is in some sense the opposite of the framework elaborated in detail for example in \cite{DGHKMT}, wherein the starting point is  a linear population model and then nonlinearities are incorporated via interaction variables. 

For linear models the basic (or net) reproduction number $\mathcal{R}$ could be defined (heuristically, or from the biological point of view) as the average number of offspring 'produced' by a newborn individual in her expected lifetime. Note that, crucially, this definition assumes that the environment any given individual is experiencing remains constant throughout her lifetime. This is the case in a linear model, and hence we can talk about a universal constant, the net reproduction number, 
which then often determines the asymptotic behaviour of solutions, too. 

As a motivating example consider the following linear age-structured population model, see e.g. \cite{I,W}.
\begin{align}
u_t(a,t)+u_a(a,t)= & -\mu(a)u(a,t),\quad a\in (0,m), \label{linage1} \\
u(0,t)=& \int_0^m \beta(a)u(a,t)\,\ud a.  \label{linage2}
\end{align}
For the basic age-structured model above one can introduce the so-called survival pro\-ba\-bility, as
\begin{align}
\pi(a)=\exp\left\{-\int_0^a\mu(r)\,\ud r\right\},\quad a\in [0,m].
\end{align}
Then one can argue that intuitively it should be clear that the average number of offspring `produced' by a newborn individual in her expected lifetime is 
\begin{align}\label{agerep}
\mathcal{R}:=\int_0^m\beta(a)\pi(a)\,\ud a.
\end{align}
The natural question arises, whether we can do this in a mathematically rigorous fashion; that is, to arrive at $\mathcal{R}$ as defined above in \eqref{agerep}, using a rigorous mathematical framework. The answer is yes, but surprisingly enough it turns out that the mathematical machinery what one has to deploy to this end is rather involved; it requires us to lift the age-structured problem \eqref{linage1}-\eqref{linage2} into a bigger (than the natural) state space, in which the boundary condition becomes a positive perturbation of the main part of the generator. In the forthcoming work \cite{CF4} we present the details of this calculation (amongst other things). Interestingly, it turns out that the mathematical machinery is much less involved/cumbersome for models incorporating infinite states at birth, as we will see later. It is also interesting to note this phenomenon because in the alternative delay formulation of physiologically structured population models, see for example \cite{D1}, this is usually the other way around; i.e. models with infinite states at birth are notoriously difficult to handle/analyse as the book-keeping process (tracing back individuals' history) becomes intractable. It may be also worthwhile to note that possibly the first account of the notion of a net reproduction number is found in \cite{DublinLotka}, where Dublin and Lotka were trying to establish a systematic way to estimate the growth rate of the (age-structured) human population.

In more realistic models than \eqref{linage1}-\eqref{linage2}, the environment that individuals experience changes, for example due to the availability of resources, which may be affected by competition (e.g. by scramble or contest competition). Hence for nonlinear models, one ought to define a net reproduction function or functional, analogously. Then, one can evaluate such functions or functionals by fixing the environment, to obtain scalar quantities, i.e. net reproduction numbers. Hence we would like to emphasize that for nonlinear models, one has to really talk about a net reproduction function/functional, analogous to \eqref{agerep}, but depending on the environment (interaction variable(s)). Then $\mathcal{R}_0$ for example is simply obtained by evaluating such a function/functional at a fixed environment (for example at the extinction environment, i.e. we may define $\mathcal{R}_0:=\mathcal{R}|_{E(0)}$). In fact, as demonstrated in the papers \cite{FH,FH2,FH4,FH3} for different types of  models, the stability of non-trivial steady states can be related to the derivative of the net reproduction function, while the stability of the extinction steady state is related to $\mathcal{R}_0$. Moreover, as shown for example by Cushing and later also by Walker, the  value $\mathcal{R}_0$ (if it is defined as $\mathcal{R}_0:=\mathcal{R}|_{E(0)}$) can be used as a bifurcation parameter to establish the existence of positive steady states, see for example \cite{C85,C85-2,W1,W2,W3}.

The very natural idea of defining a scalar quantity, namely the net reproduction number $\mathcal{R}$, when fixing the environment, can be made mathematically rigorous, and at the same time allows us to derive some desirable qualitative properties of the model. Note that the notation $\mathcal{R}_0$ itself (although it may well originate from \cite{DublinLotka}, but it has been adapted widely since then in the mathematical epidemiology community), possibly reflects the fact that mathematical epidemiologists for example may think about this number as (an approximation of) the number of secondary cases produced by a single infected individual in a completely susceptible population. This can be understood exactly in the sense as fixing the environment an infected individual experiences; for example (in a simple $SIS$ model) assuming that everybody else is, and will remain, susceptible. Importantly note that, by fixing a different environment a newly infected individual experiences, assuming for example that approximately half of the population is (and will remain) susceptible, will arguably lead to a different value of $\mathcal{R}_0$; and indeed we should really define a function $\mathcal{R}(S)$, see Section 4 for a concrete model to illustrate this. So we emphasize again, that also for nonlinear models of mathematical epidemiology one can  define a function or functional depending on the dimension of the nonlinearity incorporated in the model. In the context of population dynamical models, we argue that it seems reasonable to set $\mathcal{R}_0:=\mathcal{R}|_{E(0)}$, but of course there may be also different choices of parametrisations leading to different operators $E$, see also the subsequent sections for more details.

We also mention that in recent years there has been a reinvigorated interest within the mathematical epidemiology community and beyond to define net reproduction numbers also for linear non-autonomous models, for instance for models with explicitly time-dependent pa\-ra\-me\-ters (e.g. periodic vital rates). The interested reader will find many results in this direction for example in the fairly recent papers \cite{B0,B1,B2,Inaba}. It turns out that the general idea we present here can be applied to some linear non-autonomous models, see Section 4 for an illustrative example; but the information we may infer from the time-dependent net reproduction function we define is rather limited.

\section{(Re)formulation of a nonlinear problem}

Our starting point is a nonlinear evolution equation, which we assume to describe the dynamics of a physiologically structured population, formulated as a Cauchy problem in the Banach space $\mathcal{X}$ as follows.
\begin{equation}\label{nonlinproblem}
\frac{\ud u}{\ud t}=\mathcal{A}\,u,\quad \text{D}(\mathcal{A})\subseteq\mathcal{X},\quad \quad u(0)=u_0; 
\end{equation}
where $\mathcal{A}$ (possibly multivalued) is assumed to generate a (nonlinear) strongly continuous semigroup of type $\omega$, 
on $\mathcal{X}$ (or possibly on a closed subset of $\mathcal{X}$); that is, we tacitly assume in the rest of the paper that the assumptions of the Crandall-Liggett Theorem hold true, see \cite{CL}. In other words, we assume that (at least) mild (semigroup) solutions to \eqref{nonlinproblem} exist in $\mathcal{X}$ for all times. For some of the other notions not introduced here explicitly we refer the interested reader to \cite{AGG,NAG,Sch}. 

The (new) idea to recast the nonlinear problem \eqref{nonlinproblem} as a family of linear problems proved to be fruitful to establish results concerning the existence of steady states of model \eqref{nonlinproblem}, see \cite{CF,CF2,CF3}. Here, we apply the same formulation but in a slightly more specific fashion, to show how this idea can be also used to define a net reproduction function/functional associated to the nonlinear model \eqref{nonlinproblem} (of course when it describes the evolution of a biological population). In particular we assume that $\mathcal{X},\mathcal{Y}$ are Banach lattices, $\mathcal{X}$ being the state space; while we call $\mathcal{Y}$ the parameter space. $\mathcal{Y}$ can be understood as the set of all of the possible environments an individual may experience throughout her/his lifetime. From the mathematical point of view, typically there is a natural choice of the parameter space $\mathcal{Y}$, depending on the type of nonlinearity incorporated in model \eqref{nonlinproblem}.
 
We then recast the nonlinear model \eqref{nonlinproblem} as follows.
\begin{equation}\label{problem}
\frac{\ud u}{\ud t}=\mathcal{A}_{\bf p}\, u=\left(\mathcal{B}_{\bf p}+\mathcal{C}_{\bf p}\right)\,u,\quad \text{D}(\mathcal{B}_{\bf p}+\mathcal{C}_{\bf p})\subseteq\mathcal{X},\quad \quad u(0)=u_0, \quad {\bf p}\in\mathcal{Y}.
\end{equation}
Note that the state variable $u$ describes the density distribution of individuals. Since we have fixed the environment, for every ${\bf p}\in \mathcal{Y}$, $\mathcal{B}_{\bf p}$ and $\mathcal{C}_{\bf p}$ are linear operators. We split the operator $\mathcal{A}_{\bf p}$ in such a way that $\mathcal{B}_{\bf p}$ describes mortality and individual development, such as growth, and $\mathcal{C}_{\bf p}$ describes recruitment/reproduction of individuals. Note that, mortality and fertility being some of the basic ingredients of most population dynamical models, it seems natural from the biological point of view to distinguish the operators $\mathcal{B}_{\bf p}$ and $\mathcal{C}_{\bf p}$ in this way. 

At the same time we acknowledge, that of course there are models, even na\-tu\-ral ones, for which there can be an ambiguity as for example what can be considered recruitment or reproduction, as discussed for example in the most recent papers \cite{BCR,CD}; but we shall not be focusing on such models here. 
Rather we note that of course different splittings of $\mathcal{A}_{\bf p}$ may also result in significant technical advantages, especially in the context of mathematical epidemiology. For example in case of an epidemiological model different splittings of $\mathcal{A}$ may yield different formulas for the basic reproduction number $\mathcal{R}_0$, some of which may be easier to compute due for example to the availability of different types of patient data. See for example the recent paper \cite{BW} in this direction. 

From the mathematical point of view it is intuitively clear that if the operator $\mathcal{B}_{\bf p}$ describes mortality and individual development, then it will naturally have a negative spectral bound; while the recruitment operator $\mathcal{C}_{\bf p}$ is a positive operator, by definition (of course, when fixing the 'natural' positive cone of the state space). Moreover, in the absence of mortality, it seems reasonable to assume that the spectral bound of $\mathcal{B}_{\bf p}$ is zero. That is, for example jumps/transitions between individual states, such as the ones in the interesting example presented in Section 2 of \cite{CD}, should be naturally incorporated into $\mathcal{B}_{\bf p}$. On the one hand, a different splitting  may seem  counter-intuitive, on the other hand it may also destroy the desirable mathematical properties of the operators $\mathcal{B}_{\bf p}$ and $\mathcal{C}_{\bf p}$.

There is a natural (and important) relationship between elements of the parameter space ${\bf p}\in\mathcal{Y}$, and elements  $u\in\mathcal{X}$ of the state space. The relationship is determined by the so-called environmental operator: 
\begin{equation}\label{envcond}
\mathit{E}\,:\,\mathcal{X}\to\mathcal{Y},\quad \mathit{E}(u)={\bf p}.
\end{equation} 
From the biological point of view, $\mathit{E}$ typically determines how the standing population (and in turn the environment)  affects individual mortality,  development and fertility, hence the terminology. In general, the environment may be understood in the way that individuals in the population are independent of one another, when the environment is fixed (kept as constant), see \cite{DGM}; in which case the model becomes a linear one as in \eqref{problem}.  From the mathematical point of view, $\mathit{E}$ is in general a nonlinear operator (although in many of the concrete applications problem \eqref{problem} can be set up in such a way that $E$ is in fact positive linear, which has its advantages), associated with the type of nonlinearity incorporated in \eqref{nonlinproblem}. In particular, if it is possible to recast the nonlinear problem \eqref{nonlinproblem} in the form of \eqref{problem} with a particular choice of $\mathit{E}$, such that its range is contained in $\mathbb{R}^n$ for some $n\in\mathbb{N}$, then we say that problem \eqref{nonlinproblem} incorporates (finite) $n$-dimensional nonlinearity. On the other hand, if naturally $\mathcal{Y}$ itself is an infinite dimensional vector space, then we say that problem \eqref{nonlinproblem} incorporates an infinite dimensional nonlinearity.

To illustrate the parametrisation and splitting in \eqref{problem} through a concrete example, we consider the following  structured distributed states at birth model, see e.g. \cite{AF,CDF,FGH}. 
\begin{equation}\label{distr-eq} 
\begin{aligned}
u_t(s,t)+\left(\gamma(s,P(t))u(s,t)\right)_s&=-\mu(s,P(t))u(s,t)+\int_{s_{\text{min}}}^{s_{\text{max}}}\beta(s,\sigma,P(t))u(\sigma,t)\,\ud \sigma,\quad s\in(s_{\text{min}},s_{\text{max}}), \\
\gamma(s_{\text{min}},P(t))\,u(s_{\text{min}},t)&=0,\quad P(t)=\int_{s_{\text{min}}}^{s_{\text{max}}} u(s,t)\,\ud s.
\end{aligned}
\end{equation}
This model describes the time-evolution of a structured population via the population density function $u$. As in \cite{AF,CDF,FGH} we set the state space $\mathcal{X}=L^1\left(s_{\text{min}},s_{\text{max}}\right)$, the Banach space of equivalence classes of Lebesgue integrable functions defined on the interval $\left(s_{\text{min}},s_{\text{max}}\right)$. When compared to the classic age-structured model \eqref{linage1}-\eqref{linage2}, model \eqref{distr-eq} has the distinct feature that individuals may be recruited into the population at any possible state (size), i.e. it incorporates a distributed states at birth process (and hence the zero flux boundary condition in \eqref{distr-eq}). Also note that nonlinearities, in this case induced by scramble competition effects, are introduced via density dependence through in\-di\-vi\-dual mortality $\mu$, fertility $\beta$, and growth/development $\gamma$ rates. (We do not specify here the necessary regularity assumptions on the model ingredients, but instead refer the interested reader to \cite{AF,CDF,FGH}.) In model \eqref{distr-eq}, the environment individuals are experiencing is simply determined by the total population size (scramble competition), and therefore we set
\begin{equation}
E(u)=\int_{s_{\text{min}}}^{s_{\text{max}}} u(s)\,\ud s={\bf p}\in\mathbb{R}=\mathcal{Y},
\end{equation} 
and so in this example $E$ is a positive linear operator with range in $\mathbb{R}$, hence model \eqref{distr-eq} is said to  incorporate a one-dimensional nonlinearity. With this setting, the linear operators $\mathcal{B}_{\bf p},\,\mathcal{C}_{\bf p}$ can be naturally defined as follows (assuming that $\gamma$ and $\beta$ are sufficiently smooth, for example it suffices to assume that they are continuously differentiable). 
\begin{equation}\label{operatorC}
\begin{aligned}
\mathcal{B}_{\bf p}\,u= & -\left(\gamma(\cdot,{\bf p})u\right)'-\mu(\cdot,{\bf p})u, \quad \text{D}\left(\mathcal{B}_{\bf p}\right)=\left\{u\in W^{1,1}\left(s_{\text{min}},s_{\text{max}}\right)\,|\,u(s_{\text{min}})=0\right\},   \\
\mathcal{C}_{\bf p}\,u= &\displaystyle\int_{s_{\text{min}}}^{s_{\text{max}}}\beta(\cdot,y,{\bf p})u(y)\,\ud y, \quad \text{D}\left(\mathcal{C}_{\bf p}\right)=\mathcal{X}. 
\end{aligned}
\end{equation}
Note that in the natural splitting above, $\mathcal{B}_{\bf p}$ determines individual development (growth) and mortality (death), while $\mathcal{C}_{\bf p}$ determines the recruitment of individuals into the population (birth).

\section{The definition and basic properties of the net reproduction function}

Since for every fixed environment ${\bf p}\in\mathcal{Y}$ \eqref{problem} is a linear problem, we can define the net reproduction number $\mathcal{R}$ corresponding to each of such fixed environments. Since in principle we can do this for every ${\bf p}\in\mathcal{Y}$, we can write $\mathcal{R}({\bf p})$, with $\mathcal{R}\,:\,\mathcal{Y}\to\mathbb{R}_+$, and therefore introduce the notion of a net reproduction function/functional, depending on the dimension of the parameter space $\mathcal{Y}$. We recall from \cite{HT} Thieme's theorem, which naturally gives rise to define $\mathcal{R}({\bf p})$.
\begin{theorem}\label{spectralth} \cite[Theorem 3.5]{HT}
Let $\mathcal{B}$ be a resolvent-positive operator on $\mathcal{X}$, $s(\mathcal{B})<0$, and $\mathcal{A}=\mathcal{B}+\mathcal{C}$ a positive perturbation of $\mathcal{B}$. If $\mathcal{A}$ is resolvent-positive then \begin{equation}
s(\mathcal{A})\gtrless 0 \iff r\left(-\mathcal{C}\,\mathcal{B}^{-1}\right)-1 \gtrless 0.
\end{equation}
\end{theorem}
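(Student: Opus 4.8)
The plan is to relate the real spectral data of $\mathcal{A}$ to the spectral radius of the bounded positive operator pencil $K(\lambda):=\mathcal{C}(\lambda-\mathcal{B})^{-1}$ through a resolvent factorization, and then read off the trichotomy by evaluating at $\lambda=0$. First I would record the basic positivity facts: since $\mathcal{B}$ is resolvent positive and $s(\mathcal{B})<0$, the value $0$ lies in $\rho(\mathcal{B})$ and $(0-\mathcal{B})^{-1}=-\mathcal{B}^{-1}\ge 0$; as $\mathcal{C}$ is positive, $K(0)=-\mathcal{C}\mathcal{B}^{-1}$ is a bounded positive operator, so $r(-\mathcal{C}\mathcal{B}^{-1})=r(K(0))$ is well defined. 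More generally, for every $\lambda>s(\mathcal{B})$ (in particular for $\lambda=0$, since $s(\mathcal{B})<0$) the operator $K(\lambda)$ is bounded and positive, so its spectral radius $g(\lambda):=r(K(\lambda))$ is itself a spectral value of $K(\lambda)$, by the Krein--Rutman-type fact that the spectral radius of a positive operator belongs to its spectrum.

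Next I would exploit the algebraic identity $\lambda-\mathcal{A}=(I-K(\lambda))(\lambda-\mathcal{B})$, valid on $\text{D}(\mathcal{B})$ for $\lambda\in\rho(\mathcal{B})$. Because $\lambda-\mathcal{B}$ is boundedly invertible, $\lambda-\mathcal{A}$ is invertible exactly when $I-K(\lambda)$ is, i.e. exactly when $1\in\rho(K(\lambda))$. If $g(\lambda)<1$ then $1$ exceeds the spectral radius of $K(\lambda)$, so the Neumann series $(I-K(\lambda))^{-1}=\sum_{n\ge 0}K(\lambda)^n\ge 0$ converges and $(\lambda-\mathcal{A})^{-1}=(\lambda-\mathcal{B})^{-1}(I-K(\lambda))^{-1}\ge 0$; conversely, if $g(\lambda)=1$ then $1\in\sigma(K(\lambda))$, so $\lambda\in\sigma(\mathcal{A})$. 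Combined with the characterization of the spectral bound of a resolvent-positive operator, namely that $(\lambda-\mathcal{A})^{-1}$ exists and is positive precisely for real $\lambda>s(\mathcal{A})$, this yields the central equivalence: for $\lambda>s(\mathcal{B})$ one has $g(\lambda)<1\iff \lambda>s(\mathcal{A})$.

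To convert this equivalence into the stated trichotomy I would establish that $g$ is continuous and (weakly) decreasing on $(s(\mathcal{B}),\infty)$ with $g(\lambda)\to 0$ as $\lambda\to\infty$: the monotonicity of $\lambda\mapsto(\lambda-\mathcal{B})^{-1}$ follows from the resolvent identity $(\lambda_1-\mathcal{B})^{-1}-(\lambda_2-\mathcal{B})^{-1}=(\lambda_2-\lambda_1)(\lambda_1-\mathcal{B})^{-1}(\lambda_2-\mathcal{B})^{-1}\ge 0$ for $\lambda_1<\lambda_2$, and monotonicity of the spectral radius under the operator order then transfers to $g$. Using $s(\mathcal{A})\ge s(\mathcal{B})$ (because $\mathcal{C}\ge 0$ is a positive perturbation) together with $s(\mathcal{B})<0$, one checks that whenever $s(\mathcal{A})>s(\mathcal{B})$ the value $s(\mathcal{A})$ is an interior crossing point of the level $1$, whence $g(s(\mathcal{A}))=1$ by continuity. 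Evaluating at $\lambda=0$ and using $g(0)=r(-\mathcal{C}\mathcal{B}^{-1})$ then matches the sign of $s(\mathcal{A})$ to the sign of $g(0)-1$.

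I expect the main obstacle to be the strict separation of the equality and strict-positive cases, i.e. ruling out that $g$ equals $1$ identically on a nondegenerate subinterval of $(s(\mathcal{B}),s(\mathcal{A})]$, which would blur the distinction between $s(\mathcal{A})=0$ and $s(\mathcal{A})>0$. The central equivalence $g(\lambda)<1\iff\lambda>s(\mathcal{A})$ already settles $s(\mathcal{A})<0$ outright, and the boundary identity $g(s(\mathcal{A}))=1$ settles $s(\mathcal{A})=0$; the remaining implication $s(\mathcal{A})>0\Rightarrow g(0)>1$ is where one needs either strict monotonicity of the spectral radius (available under an irreducibility or strong-positivity hypothesis on $K$) or a direct spectral argument at $\lambda=0$. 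A secondary point requiring care is the precise meaning of \emph{positive perturbation}: it must guarantee $\text{D}(\mathcal{A})=\text{D}(\mathcal{B})$ and that each $K(\lambda)$ is a genuinely bounded positive operator, so that the factorization and Neumann-series steps are legitimate.
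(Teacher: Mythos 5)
First, a point of reference: the paper does not prove this statement at all — it is quoted from Thieme \cite[Theorem 3.5]{HT} and used as a black box — so your attempt can only be measured against the standard (Thieme-type) argument, whose skeleton (the factorization $\lambda-\mathcal{A}=(I-K(\lambda))(\lambda-\mathcal{B})$, the Neumann series, and monotonicity of $g(\lambda):=r(K(\lambda))$) you reproduce correctly. Within that skeleton there is a small repairable gap and a genuine one. The small one: in your ``central equivalence'' the direction $\lambda>s(\mathcal{A})\Rightarrow g(\lambda)<1$ is not established, since you only rule out $g(\lambda)=1$; invertibility of $\lambda-\mathcal{A}$ alone is compatible with $g(\lambda)>1$ and $1\in\rho(K(\lambda))$. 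The repair uses positivity, not invertibility: from the factorization, $(I-K(\lambda))^{-1}=(\lambda-\mathcal{B})(\lambda-\mathcal{A})^{-1}=I+\mathcal{C}(\lambda-\mathcal{A})^{-1}\ge 0$ for $\lambda>s(\mathcal{A})$, and for a bounded positive $T$ with $(I-T)^{-1}$ existing and positive, the identity $I+T+\cdots+T^n=(I-T)^{-1}-T^{n+1}(I-T)^{-1}\le (I-T)^{-1}$ gives $0\le T^n\le (I-T)^{-1}$, hence $r(T)\le 1$, and $r(T)<1$ because $r(T)\in\sigma(T)$ for positive operators while $1\in\rho(T)$.

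The genuine gap is the one you flag but do not close: separating $s(\mathcal{A})=0$ from $s(\mathcal{A})>0$. Your route needs (i) continuity of $g$ and (ii) that $g$ is not identically $1$ on $[0,s(\mathcal{A})]$, and neither follows from what you have. The spectral radius is only upper semicontinuous under norm convergence, and a non-increasing, upper semicontinuous function can still jump downward, so the right-continuity implicit in ``$g(s(\mathcal{A}))=1$ by continuity'' is not free; and monotonicity together with $g\ge 1$ on $(s(\mathcal{B}),s(\mathcal{A})]$ and $g<1$ beyond is perfectly consistent with $g\equiv 1$ on $[0,s(\mathcal{A})]$, i.e.\ with $s(\mathcal{A})>0$ yet $g(0)=1$. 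Your proposed fixes do not work under the stated hypotheses: irreducibility or strong positivity of $K$ is \emph{not} assumed in the theorem, so invoking it proves a strictly weaker result, and no ``direct spectral argument at $\lambda=0$'' is exhibited. The missing idea, which is what the actual proof leans on, is the log-convexity of $\lambda\mapsto r(K(\lambda))$ on $(s(\mathcal{B}),\infty)$ — a Kingman-type theorem applicable because the family $K(\lambda)=\mathcal{C}(\lambda-\mathcal{B})^{-1}$ is completely monotone, $(-1)^n\frac{\ud^n}{\ud\lambda^n}K(\lambda)=n!\,\mathcal{C}(\lambda-\mathcal{B})^{-n-1}\ge 0$. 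Log-convexity yields continuity (a finite convex function on an open interval is continuous) and kills the flat piece: if $g(0)=g(s(\mathcal{A}))=1$ with $s(\mathcal{A})>0$ while $g(\lambda)<1$ for some $\lambda>s(\mathcal{A})$, then writing $s(\mathcal{A})=\theta\cdot 0+(1-\theta)\lambda$ with $\theta\in(0,1)$, convexity of $\log g$ gives $0=\log g(s(\mathcal{A}))\le (1-\theta)\log g(\lambda)<0$, a contradiction. With this single ingredient added, your outline assembles into a complete proof; without it, the trichotomy collapses to the two-sided statement $s(\mathcal{A})<0\iff r(-\mathcal{C}\mathcal{B}^{-1})<1$.
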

Above $s(\mathcal{A})$ and $s(\mathcal{B})$ stand for the spectral bound of the operators $\mathcal{A}$ and $\mathcal{B}$, respectively; while $r\left(-\mathcal{C}\,\mathcal{B}^{-1}\right)$ denotes the spectral radius of the bounded operator $-\mathcal{C}\,\mathcal{B}^{-1}$. 
Note that (at least in our opinion) the beauty of Theorem \ref{spectralth} is that its hypotheses exactly correspond to the biologically natural splitting of the generator $\mathcal{A}_{\bf p}$. If  \eqref{problem} describes a family of linear population models, then (for each fixed ${\bf p}$) it is naturally governed by a positive semigroup, with its generator $\mathcal{A}_{\bf p}=\mathcal{B}_{\bf p}+\mathcal{C}_{\bf p}$ being resolvent positive according to Definition 3.1 in \cite{HT}. Similarly, it can be shown that $\mathcal{B}_{\bf p}$ is resolvent positive, and since it incorporates mortality and individual development (but not recruitment) its spectral bound is necessarily negative. Finally $\mathcal{C}_{\bf p}$ describing recruitment (only) is necessarily a positive perturbation of $\mathcal{B}_{\bf p}$ as long as it is defined (at least) on the domain of $\mathcal{B}_{\bf p}$. We note that, typically $\mathcal{C}_{\bf p}$ can be defined on a larger set than the domain of $\mathcal{B}_{\bf p}$ (depending of course on the regularity assumptions we impose on the model ingredients), for example in case of the distributed states at birth model \eqref{distr-eq}, see \eqref{operatorC} above and also  \cite{AF,FGH} for more details.

We note at this point, that for example if mortality is large enough, then of course it would be possible to incorporate part of the  recruitment operator $\mathcal{C}_{\bf p}$ into $\mathcal{B}_{\bf p}$, while preserving all of the mathematical properties of these operators, which are necessary to apply Theorem \ref{spectralth}. But would such a splitting of $\mathcal{A}_{\bf p}$ be rather unnatural? 
Of course there could be a greater ambiguity how to split individual development/transitioning (if any) as discussed 
for example in \cite{CD}.

We now apply Theorem \ref{spectralth} at every fixed environment ${\bf p}$ to arrive at the definition of the net reproduction function. We tacitly assume that the operators in \eqref{problem} satisfy the assumptions of Theorem \ref{spectralth}. 
\begin{definition}
The net reproduction function for model \eqref{problem} is defined as
\begin{equation}\label{netrep}
\mathcal{R}({\bf p}):=r\left(-\mathcal{C}_{\bf p}\,\mathcal{B}_{\bf p}^{-1}\right).
\end{equation}
\end{definition}
Note that although this definition is mathematically inspired, the idea of defining the net reproduction number by fixing 
the environment (which, in many of the examples, amounts to keep the standing population size in the vital rates constant, see e.g. \eqref{operatorC}) seems to be rather natural. We also note that of course for concrete applications it may be very difficult (if not impossible) to compute the operator $-\mathcal{C}_{\bf p}\,\mathcal{B}_{\bf p}^{-1}$ (this was done for model \eqref{distr-eq} in \cite{AF}), and then to determine its spectral radius (again this was computed for the distributed states at birth model \eqref{distr-eq} in some special cases in \cite{AF}). For the age-structured (single state at birth) model \eqref{linage1}-\eqref{linage2} this is carried out in the forthcoming work \cite{CF4}.

Next we highlight that an immediate and desirable consequence of the definition of the net reproduction function above is that in general it implies a strong connection between the existence of positive steady states $u_*$ of the nonlinear problem  \eqref{nonlinproblem} and parameter values ${\bf p}_*$, such that $\mathcal{R}({\bf p}_*)=1$ holds. 
This is a very desirable property from the biological point of view, as one would naturally expect that non-trivial steady states of a population model arise when the net reproduction number equals $1$. For a large class of population models this is rigorously established as follows. 

\begin{proposition}\label{steadystate}
Assume that $\mathcal{B}_{\bf p}+\mathcal{C}_{\bf p}$ generates a positive, irreducible and eventually compact semigroup on $\mathcal{X}$ for all ${\bf p}\in\mathcal{Y}$. Then, if equation \eqref{nonlinproblem} admits a strictly positive steady state $u_*$, then $\mathcal{R}({\bf p_*})=1$, with ${\bf p}_*=E(u_*)$ holds.
\end{proposition}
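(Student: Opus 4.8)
The plan is to turn the nonlinear steady-state equation into a linear eigenvalue problem at the single, self-consistent environment ${\bf p}_* = E(u_*)$, and then to read off the value of $\mathcal{R}$ from Theorem \ref{spectralth}. The starting observation is the defining consistency of the reformulation \eqref{problem}: the family of linear operators is built precisely so that $\mathcal{A}\,u = \mathcal{A}_{E(u)}\,u$ for every $u$, the environment being frozen at the value the standing population itself produces. Consequently, a strictly positive steady state $u_*$ of \eqref{nonlinproblem}, which by definition satisfies $\mathcal{A}\,u_* = 0$, also satisfies
\[
\mathcal{A}_{{\bf p}_*}\,u_* = \left(\mathcal{B}_{{\bf p}_*}+\mathcal{C}_{{\bf p}_*}\right)u_* = 0 ,
\]
so that $0$ is an eigenvalue of the \emph{linear} generator $\mathcal{A}_{{\bf p}_*}$ admitting the strictly positive eigenvector $u_*\in \text{D}(\mathcal{A}_{{\bf p}_*})$.

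The second step is to identify this eigenvalue with the spectral bound $s(\mathcal{A}_{{\bf p}_*})$. Here I would appeal to the Perron--Frobenius (Krein--Rutman) theory for positive semigroups, see e.g. \cite{AGG,NAG,Sch}: for a positive, irreducible and eventually compact semigroup the spectral bound of its generator is a dominant, algebraically simple eigenvalue, and it is the \emph{only} eigenvalue possessing a positive eigenvector. Since $u_*>0$ is an eigenvector for the eigenvalue $0$, this uniqueness forces $s(\mathcal{A}_{{\bf p}_*})=0$.

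The final step is a direct application of Theorem \ref{spectralth} at the fixed environment ${\bf p}_*$, whose hypotheses hold by the discussion following its statement. The trichotomy $s(\mathcal{A}_{{\bf p}_*})\gtrless 0 \iff r(-\mathcal{C}_{{\bf p}_*}\mathcal{B}_{{\bf p}_*}^{-1})-1\gtrless 0$ yields, as its boundary case, the equivalence $s(\mathcal{A}_{{\bf p}_*})=0 \iff r(-\mathcal{C}_{{\bf p}_*}\mathcal{B}_{{\bf p}_*}^{-1})=1$, because the three mutually exclusive sign possibilities are matched one-to-one. Combined with $s(\mathcal{A}_{{\bf p}_*})=0$ from the previous step, this gives $\mathcal{R}({\bf p}_*)=r(-\mathcal{C}_{{\bf p}_*}\mathcal{B}_{{\bf p}_*}^{-1})=1$, as claimed.

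I expect the second step to be the crux. This is exactly where the three structural hypotheses on the semigroup are consumed: eventual compactness guarantees that the spectrum in the relevant half-plane consists of isolated eigenvalues of finite multiplicity, so that $s(\mathcal{A}_{{\bf p}_*})$ is genuinely attained as an eigenvalue rather than merely an approximate or essential spectral value; while irreducibility delivers the strong positivity and, crucially, the \emph{uniqueness} of the positive eigendirection. Some care is needed to cite the generator-level analogues of the classical finite-dimensional Perron--Frobenius theorem in the correct form, and to confirm that eventual compactness (rather than a stronger regularity such as analyticity) already suffices to place the spectral bound in the point spectrum with a positive eigenvector.
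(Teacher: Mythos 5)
Your proof is correct and follows essentially the same route as the paper's: identify $0$ as an eigenvalue of $\mathcal{B}_{{\bf p}_*}+\mathcal{C}_{{\bf p}_*}$ admitting the strictly positive eigenvector $u_*$, invoke the positive-semigroup Perron--Frobenius theory (irreducibility plus eventual compactness imply the spectral bound is the only eigenvalue with a strictly positive eigenvector) to conclude $s\left(\mathcal{B}_{{\bf p}_*}+\mathcal{C}_{{\bf p}_*}\right)=0$, and then apply Theorem \ref{spectralth}. Your extra remarks on the boundary case of the trichotomy and on exactly where the semigroup hypotheses are consumed are accurate elaborations of steps the paper leaves implicit.
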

\begin{proof}
If $u_*$ is a strictly positive positive steady state of \eqref{nonlinproblem}, then we have $u_*\in \text{Ker}\left(\mathcal{B}_{E(u_*)}+\mathcal{C}_{E(u_*)}\right)$, that is $0$ is an eigenvalue of the operator $\left(\mathcal{B}_{E(u_*)}+\mathcal{C}_{E(u_*)}\right)$. If however, the semigroup generated by $\left(\mathcal{B}_{E(u_*)}
+\mathcal{C}_{E(u_*)}\right)$ is irreducible and eventually compact, then the spectral bound is the only eigenvalue, which admits a corresponding strictly positive eigenvector (see e.g. \cite{AGG,NAG,Sch} for details), and therefore we have $s\left(\mathcal{B}_{E(u_*)}+\mathcal{C}_{E(u_*)}\right)=0$, which by Theorem \ref{spectralth} implies that $\mathcal{R}({\bf p_*})=1$, where ${\bf p}_*=E(u_*)\in\mathcal{Y}$. 
\end{proof}

Note that if in addition we assume that $E$ is positive (which is often the case in concrete applications, as seen e.g. in the previous section for model \eqref{distr-eq}), then we can restrict ourselves to parameter values ${\bf p}\in\mathcal{Y}_+$.

Importantly, note that Proposition \ref{steadystate} states that $\mathcal{R}({\bf p_*})=1$ is a necessary condition for the existence of a positive steady state $u_*$ of the nonlinear problem \eqref{nonlinproblem}. But the condition $\mathcal{R}({\bf p_*})=1$ cannot always shown to be sufficient, as already noted for example in \cite{CF2,FH3}. However, for a large class of models with one-dimensional nonlinearities this condition can in fact shown to be sufficient, too.

To this end, we impose some further restrictions on the environmental operator $E$. 
In particular, we assume that $E$ is of the form $E=\hat{E}+r$, for some $r\in\mathbb{R}$; where $\hat{E}\,:\,\mathcal{X}\to\mathbb{R}$ is such that $\forall\,\alpha\in\mathbb{R},\, \forall\,x\in\mathcal{X}$ we have $\hat{E}(\alpha x)=\alpha\hat{E}(x)$, and $\hat{E}(x)\neq 0$, unless $x\equiv 0$. We may say that the operator $E$ is affine, but strictly speaking it is not as $\hat{E}$ is not necessarily linear.

\begin{proposition}\label{steadystate2}
Assume that $\mathcal{Y}$ is one-dimensional, and that for every ${\bf p}\in\mathcal{Y}$ the operators $\mathcal{B}_{\bf p},\,\mathcal{C}_{\bf p}$ satisfy the assumptions of Theorem \ref{spectralth}. Moreover assume that $E$ satisfies the hypothesis above. Then, if there exists a ${\bf p}_*\in\mathcal{Y}$ such that $\mathcal{R}({\bf p}_*)=1$, then the nonlinear model  \eqref{nonlinproblem} admits a steady state.
\end{proposition}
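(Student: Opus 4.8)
The plan is to reverse the reasoning behind Proposition~\ref{steadystate}. Given a value ${\bf p}_*\in\mathcal{Y}$ with $\mathcal{R}({\bf p}_*)=1$, Theorem~\ref{spectralth} (in its boundary case) tells us that $s\left(\mathcal{B}_{{\bf p}_*}+\mathcal{C}_{{\bf p}_*}\right)=0$, so the generator $\mathcal{A}_{{\bf p}_*}=\mathcal{B}_{{\bf p}_*}+\mathcal{C}_{{\bf p}_*}$ has $0$ as its spectral bound. First I would upgrade this to an eigenvalue statement: exactly as in Proposition~\ref{steadystate}, under the same positivity, irreducibility and eventual compactness hypotheses, the spectral bound $0$ is a dominant eigenvalue carrying a strictly positive eigenvector $u_*$, so that $\left(\mathcal{B}_{{\bf p}_*}+\mathcal{C}_{{\bf p}_*}\right)u_*=0$. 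Equivalently, one can produce $u_*$ from the bounded operator: if $\psi_*\geq 0$ is a Krein--Rutman eigenvector of $-\mathcal{C}_{{\bf p}_*}\mathcal{B}_{{\bf p}_*}^{-1}$ for the eigenvalue $r\left(-\mathcal{C}_{{\bf p}_*}\mathcal{B}_{{\bf p}_*}^{-1}\right)=1$, then $u_*:=-\mathcal{B}_{{\bf p}_*}^{-1}\psi_*$ is nonnegative (since $s(\mathcal{B}_{{\bf p}_*})<0$ and resolvent positivity make $-\mathcal{B}_{{\bf p}_*}^{-1}$ a positive operator) and satisfies $\mathcal{B}_{{\bf p}_*}u_*=-\psi_*$, $\mathcal{C}_{{\bf p}_*}u_*=\psi_*$, hence lies in $\text{Ker}\left(\mathcal{B}_{{\bf p}_*}+\mathcal{C}_{{\bf p}_*}\right)$.

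Having secured a nontrivial $u_*\geq 0$ with $\left(\mathcal{B}_{{\bf p}_*}+\mathcal{C}_{{\bf p}_*}\right)u_*=0$, the only thing separating $u_*$ from a genuine steady state of \eqref{nonlinproblem} is the consistency condition $E(u_*)={\bf p}_*$, which in general fails because $u_*$ is determined only up to a scalar. The decisive step is therefore to rescale, and this is exactly where the one-dimensionality of $\mathcal{Y}$ and the postulated structure of $E$ come in. Since $\alpha u_*$ lies in $\text{Ker}\left(\mathcal{B}_{{\bf p}_*}+\mathcal{C}_{{\bf p}_*}\right)$ for every $\alpha\in\mathbb{R}$, I would look for $\alpha$ solving $E(\alpha u_*)={\bf p}_*$. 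Writing $E=\hat{E}+r$ and using $\hat{E}(\alpha u_*)=\alpha\hat{E}(u_*)$, this single scalar equation becomes
\begin{equation}
\alpha\,\hat{E}(u_*)+r={\bf p}_*,
\end{equation}
and since $u_*\not\equiv 0$ forces $\hat{E}(u_*)\neq 0$, it has the unique solution $\alpha=\left({\bf p}_*-r\right)/\hat{E}(u_*)$.

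Setting $w:=\alpha u_*$ then closes the argument: by construction $E(w)={\bf p}_*$, and
\begin{equation}
\left(\mathcal{B}_{E(w)}+\mathcal{C}_{E(w)}\right)w=\left(\mathcal{B}_{{\bf p}_*}+\mathcal{C}_{{\bf p}_*}\right)(\alpha u_*)=\alpha\left(\mathcal{B}_{{\bf p}_*}+\mathcal{C}_{{\bf p}_*}\right)u_*=0,
\end{equation}
so $w$ is a steady state of \eqref{nonlinproblem}. Note that $w$ is nontrivial precisely when ${\bf p}_*\neq r=E(0)$, and that it is strictly positive provided $\alpha>0$, i.e. when ${\bf p}_*-r$ and $\hat{E}(u_*)$ have the same sign (for the natural choice $\hat{E}(u)=\int u$, $r=0$ of the previous section this just means ${\bf p}_*>0$).

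I expect the genuine obstacle to sit in the first paragraph, namely in turning the spectral-bound identity $s(\mathcal{A}_{{\bf p}_*})=0$ (equivalently $r\left(-\mathcal{C}_{{\bf p}_*}\mathcal{B}_{{\bf p}_*}^{-1}\right)=1$) into the existence of a positive eigenvector: for a general positive operator on a Banach lattice the spectral radius need not belong to the point spectrum, so some compactness/irreducibility must be imposed, just as in Proposition~\ref{steadystate}. By contrast, the rescaling step is almost free once $\mathcal{Y}$ is one-dimensional, since there is a single free parameter $\alpha$ to match a single scalar constraint. This is also the structural reason the converse fails for higher-dimensional nonlinearities, where $E(\alpha u_*)={\bf p}_*$ becomes an overdetermined system and $\mathcal{R}({\bf p}_*)=1$ no longer suffices.
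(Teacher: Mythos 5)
Your proposal follows essentially the same route as the paper's proof: Theorem \ref{spectralth} converts $\mathcal{R}({\bf p}_*)=1$ into $s\left(\mathcal{B}_{{\bf p}_*}+\mathcal{C}_{{\bf p}_*}\right)=0$, a nontrivial kernel element $u_*$ is produced, and the one-dimensionality of $\mathcal{Y}$ together with the structure $E=\hat{E}+r$ is used to rescale it so that $E(\alpha u_*)={\bf p}_*$, which is exactly the paper's closing equation $\alpha\,\hat{E}(x)+r={\bf p}_*$. If anything you are more careful than the paper, which simply asserts a subspace $V_{{\bf p}_*}$ ``corresponding to the spectral value $0$'' with $\dim(V_{{\bf p}_*})\ge 1$ without justifying that $0$ is actually an eigenvalue --- precisely the gap you flag and partially repair via the Krein--Rutman argument applied to the bounded operator $-\mathcal{C}_{{\bf p}_*}\mathcal{B}_{{\bf p}_*}^{-1}$.
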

\begin{proof}
If there exists a ${\bf p}_*\in\mathcal{Y}$ such that $\mathcal{R}({\bf p}_*)=1$, then by Theorem \ref{spectralth} we have that 
$s\left(\mathcal{B}_{\bf p_*}+\mathcal{C}_{\bf p_*}\right)=0$, with a subspace $V_{{\bf p}_*}\subseteq\mathcal{X}$ corresponding to the spectral value $0$, with dim$(V_{{\bf p}_*})\ge 1$, in general.  Note that we only need to show that there exists a $v_*\in V_{{\bf p}_*}$ such that $E(v_*)={\bf p}_*$ holds. However, it follows from our assumptions on $E$, that for every $0\not\equiv x\in\mathcal{X}$ 
the equation
\begin{equation}
\alpha\,\hat{E}(x)+r={\bf p}_*
\end{equation}
has a solution $\alpha\in\mathbb{R}$.
\end{proof}

Note that to allow a greater generality (than for example in \cite{CF2}) we did not the assert the existence of a strictly positive steady state in Proposition \ref{steadystate2}. Imposing further assumptions on the family of operators  $\mathcal{B}_{\bf p}+\mathcal{C}_{\bf p}$ (such that they generate eventually compact and irreducible semigroups), and on $E$, would allow us to obtain more information about the eigenspace $V_{{\bf p}_*}$, and in turn guarantee for example the existence of a strictly positive steady state.

We proceed by formulating two conjectures, which further underpin the significance of the definition of the net reproduction function in \eqref{netrep}, and in more general the idea of reformulating a nonlinear evolution equation \eqref{nonlinproblem} as a parametrised family of linear problems \eqref{problem}. Below by linearisation of \eqref{nonlinproblem} we mean the linear problem defined by using the linear approximation of the nonlinear operator $\mathcal{A}$ via its Fr\'{e}chet derivative (if it exists). We also note that in fact this assumption could be weakened by assuming only that $\mathcal{A}$ is G\^{a}teaux differentiable in the direction of the positive cone of $\mathcal{X}$.

\begin{conjecture}
Assume that the linearisation of \eqref{nonlinproblem} at the trivial steady state $u_*\equiv 0$ exists, and that $E(0)={\bf 0}$. Then, the trivial steady state $u_*\equiv 0$ is locally asymptotically stable if $\mathcal{R}({\bf 0})<1$; and it is unstable if $\mathcal{R}({\bf 0})>1$.
\end{conjecture}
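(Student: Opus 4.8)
The plan is to reduce the conjecture to Theorem \ref{spectralth} in two moves: first identify the linearisation of \eqref{nonlinproblem} at $u_*\equiv 0$ with the frozen linear operator $\mathcal{A}_{\bf 0}=\mathcal{B}_{\bf 0}+\mathcal{C}_{\bf 0}$ obtained by fixing the environment at the extinction value ${\bf p}={\bf 0}=E(0)$, and then invoke the principle of linearised stability together with the spectral characterisation $\mathcal{R}({\bf 0})=r\left(-\mathcal{C}_{\bf 0}\,\mathcal{B}_{\bf 0}^{-1}\right)$ to convert the sign of the spectral bound $s(\mathcal{A}_{\bf 0})$ into the threshold condition on $\mathcal{R}({\bf 0})$.

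First I would identify the linearisation. Since the recast right-hand side reads $\mathcal{A}(u)=\mathcal{A}_{E(u)}\,u$ and $\mathcal{A}_{\bf p}$ is linear for each fixed ${\bf p}$, we have $\mathcal{A}(0)=\mathcal{A}_{E(0)}\,0=0$, confirming that $u_*\equiv 0$ is indeed a steady state, and moreover
\begin{equation}
\mathcal{A}(u)-\mathcal{A}_{\bf 0}\,u=\big(\mathcal{A}_{E(u)}-\mathcal{A}_{E(0)}\big)u.
\end{equation}
Because $E(0)={\bf 0}$ and $E$ is continuous, $E(u)\to{\bf 0}$ as $u\to 0$; provided the family ${\bf p}\mapsto\mathcal{A}_{\bf p}$ depends continuously on ${\bf p}$ in a suitable operator sense, the factor $\mathcal{A}_{E(u)}-\mathcal{A}_{\bf 0}$ tends to $0$, so the displayed remainder is $o(\|u\|)$. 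Hence the Fréchet derivative (or the directional derivative along the positive cone, as remarked before the conjecture) of $\mathcal{A}$ at $0$ is exactly $\mathcal{A}_{\bf 0}=\mathcal{B}_{\bf 0}+\mathcal{C}_{\bf 0}$. Note that the environmental variation contributes nothing to the derivative precisely because that contribution is multiplied by $u$, which vanishes at the base point; biologically this says that near extinction every individual experiences the fixed extinction environment ${\bf 0}$, so the dynamics is governed by the linear model at ${\bf p}={\bf 0}$.

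With the linearisation identified, the principle of linearised stability yields local asymptotic stability of $u_*\equiv 0$ when $s(\mathcal{A}_{\bf 0})<0$ and instability when $s(\mathcal{A}_{\bf 0})>0$. Applying Theorem \ref{spectralth} to $\mathcal{A}_{\bf 0}=\mathcal{B}_{\bf 0}+\mathcal{C}_{\bf 0}$, whose hypotheses hold by assumption (in particular $s(\mathcal{B}_{\bf 0})<0$), translates the sign of $s(\mathcal{A}_{\bf 0})$ into the sign of $r\left(-\mathcal{C}_{\bf 0}\,\mathcal{B}_{\bf 0}^{-1}\right)-1=\mathcal{R}({\bf 0})-1$. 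This closes the argument: $\mathcal{R}({\bf 0})<1$ forces $s(\mathcal{A}_{\bf 0})<0$ and hence stability, while $\mathcal{R}({\bf 0})>1$ forces $s(\mathcal{A}_{\bf 0})>0$ and hence instability.

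The main obstacle is the principle of linearised stability itself in this genuinely nonlinear-semigroup setting, which is why the statement is posed as a conjecture. The Crandall--Liggett framework guarantees only mild solutions, so one must additionally (i) ensure that the spectral bound of the generator of the linearised semigroup coincides with its growth bound $\omega_0$ — this is not automatic in infinite dimensions and typically requires a spectral-mapping-type property, which one would secure by assuming, as in Proposition \ref{steadystate}, that the linearised semigroup is positive and eventually compact, so that $s(\mathcal{A}_{\bf 0})=\omega_0$; and (ii) control the higher-order remainder $\big(\mathcal{A}_{E(u)}-\mathcal{A}_{\bf 0}\big)u$ uniformly enough near the origin to transfer exponential decay of the linear flow into local attractivity of the nonlinear one, and instability of the linear flow into instability of the nonlinear equilibrium. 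Making (i)--(ii) rigorous under minimal hypotheses on $\mathcal{A}_{\bf p}$ and $E$ is the genuinely hard part of the proof.
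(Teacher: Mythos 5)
The first thing to note is that the paper does not prove this statement at all: it is posed as a \emph{conjecture}, and the text following it only offers the interpretation that the linearisation of \eqref{nonlinproblem} at $u_*\equiv 0$ (when it exists) coincides with the frozen linear problem \eqref{linatzero}, together with the remark that this identification has been verified case by case for the concrete models studied in \cite{F,FGH,FH,FH2,FH3}. Your first move --- identifying the Fr\'{e}chet derivative of $u\mapsto\mathcal{A}_{E(u)}u$ at $0$ with $\mathcal{A}_{\bf 0}=\mathcal{B}_{\bf 0}+\mathcal{C}_{\bf 0}$ --- is exactly that interpretation, so on this point you and the paper agree. But your write-up is not a proof, and you say so yourself: items (i) and (ii) in your final paragraph are precisely the open content of the conjecture, not technical details to be deferred.

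Two gaps deserve to be named concretely. First, the remainder estimate $\left(\mathcal{A}_{E(u)}-\mathcal{A}_{\bf 0}\right)u=o(\|u\|)$ cannot follow merely from continuity of ${\bf p}\mapsto\mathcal{A}_{\bf p}$ ``in a suitable operator sense'': the operators $\mathcal{A}_{\bf p}$ are unbounded, and so is their difference in general. For the concrete model \eqref{distr-eq} the difference contains the transport term $\left(\left(\gamma(\cdot,E(u))-\gamma(\cdot,0)\right)u\right)'$, which is not $o(\|u\|_{L^1})$ for $u$ merely in $L^1$; one needs graph-norm control or structural assumptions on $\gamma$ to make this small. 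So even the identification of the linearisation --- whose existence the conjecture assumes, but whose form still must be computed --- is not established by your argument at this level of generality. Second, and more fundamentally, there is no off-the-shelf principle of linearised stability in the Crandall--Liggett framework the paper works in: solutions are only mild, $\mathcal{A}$ may be multivalued, and the conjecture's hypotheses include neither positivity, nor irreducibility, nor eventual compactness of the linearised semigroup, so even the identity $s(\mathcal{A}_{\bf 0})=\omega_0$ between spectral bound and growth bound is unavailable, let alone the transfer of linear instability to nonlinear instability. Your proposal correctly reduces the conjecture to these facts via Theorem \ref{spectralth}, but proving them (or isolating hypotheses under which they hold) is the whole difficulty; as written, the proposal is a plausible reduction plus an honest list of what is missing, which is essentially the status the statement already has in the paper.
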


This first conjecture can be interpreted as follows. The linearisation of \eqref{nonlinproblem} at the steady state $u_*\equiv 0$ (if it exists) coincides with the following linear Cauchy problem. 
\begin{equation}\label{linatzero}
\frac{\ud u}{\ud t}=\left(\mathcal{B}_{E(u_*\equiv 0)}+\mathcal{C}_{E(u_*\equiv 0)}\right)\,u,\quad \quad u(0)=u_0.
\end{equation}
This in fact can be verified directly for all of the structured population models (including models with single and with  infinite states at birth) we studied via linearisation techniques for example in \cite{F,FGH,FH,FH2,FH3}. 

The second conjecture we formulate may appear slightly far-fetched for the cautious reader, however we point out that there is a strong connection between the dissipativity condition of a generator of a linear (quasi-)contraction semigroup and the accretive condition of the generator of a nonlinear contraction semigroup. (For these basic concepts see e.g. \cite{CL,NAG}.)  Also note that here no linearisation of $\mathcal{A}$ is required, so if the claim holds, then it is potentially applicable to a wider class of models.

\begin{conjecture}
Assume  that $E(0)={\bf 0}$, and that there exists an $\varepsilon>0$ such that for every ${\bf p}\in\mathcal{Y}$, $\mathcal{R}({\bf p})<1-\varepsilon$ holds. Then the trivial steady state $u_*\equiv 0$ of \eqref{nonlinproblem} is globally asymptotically stable.
\end{conjecture}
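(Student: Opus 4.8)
The plan is to convert the uniform subcriticality of the net reproduction function into a uniform \emph{strict} dissipativity of the linearised operators, and then to transfer this — via the link between linear dissipativity and nonlinear accretivity alluded to above — into a strictly negative type for the nonlinear semigroup.

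First, I would upgrade the sign statement of Theorem \ref{spectralth} to a quantitative one. Applying the theorem to the shifted pair $(\mathcal{B}_{\bf p}-\lambda I,\mathcal{C}_{\bf p})$, which is legitimate for small $\lambda$ since $s(\mathcal{B}_{\bf p})<0$, and using $s(\mathcal{A}_{\bf p}-\lambda I)=s(\mathcal{A}_{\bf p})-\lambda$, one obtains the characterisation
\[
s(\mathcal{A}_{\bf p})\gtrless\lambda \iff r\!\left(-\mathcal{C}_{\bf p}(\mathcal{B}_{\bf p}-\lambda I)^{-1}\right)\gtrless 1 .
\]
Since $\lambda\mapsto(\mathcal{B}_{\bf p}-\lambda I)^{-1}$ is monotone for resolvent-positive $\mathcal{B}_{\bf p}$, the spectral radius on the right is continuous and monotone in $\lambda$, and at $\lambda=0$ it equals $\mathcal{R}({\bf p})$; so the hypothesis $\mathcal{R}({\bf p})<1-\varepsilon$ for all ${\bf p}$ should yield a uniform $\omega_0>0$ with $s(\mathcal{A}_{\bf p})\le-\omega_0$ for every ${\bf p}\in\mathcal{Y}$ (this already requires uniform continuity of the spectral radius in ${\bf p}$ and $\inf_{\bf p}|s(\mathcal{B}_{\bf p})|>0$). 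Exploiting that \eqref{problem} is governed by positive semigroups on the Banach lattice $\mathcal{X}$ (an $L^1$-space in the guiding example \eqref{distr-eq}), where the spectral bound coincides with the growth bound, this gives uniform exponential stability $\|T_{\bf p}(t)\|\le M e^{-\omega_0 t}$ of the linear semigroups $T_{\bf p}$.

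Second, I would pass from uniform exponential stability to uniform strict dissipativity. Renorming $\mathcal{X}$ by $\|x\|_\sim:=\sup_{t\ge0}e^{\omega t}\|T_{\bf p}(t)x\|$ for some $0<\omega<\omega_0$ makes each $\mathcal{A}_{\bf p}+\omega I$ dissipative; the decisive requirement is that a \emph{single} equivalent norm (or a single semi-inner-product bracket $[\cdot,\cdot]_+$) can be chosen so that
\[
[x,\mathcal{A}_{\bf p}x]_+\le-\omega\|x\|\qquad\text{holds uniformly in }{\bf p}\in\mathcal{Y}.
\]
Because $E(0)={\bf 0}$ and each $\mathcal{A}_{\bf p}$ is linear we have $\mathcal{A}_{E(0)}0=0$, so evaluating the bracket at ${\bf p}=E(u)$, $x=u$ along a mild solution $u(t)$ of \eqref{nonlinproblem} yields the one-sided differential inequality $\tfrac{d^+}{dt}\|u(t)\|\le[u(t),\mathcal{A}_{E(u(t))}u(t)]_+\le-\omega\|u(t)\|$, whence $\|u(t)\|\le e^{-\omega t}\|u_0\|\to0$ and global asymptotic stability follows. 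Equivalently, one shows that $-\mathcal{A}$ is $\omega$-accretive, so the nonlinear semigroup has type $-\omega$ and $\|S(t)u_0\|=\|S(t)u_0-S(t)0\|\le e^{-\omega t}\|u_0\|$ within the Crandall--Liggett framework \cite{CL,NAG}.

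The hard part will be the second step, on two counts. The relation $s(\mathcal{A}_{\bf p})<0$ controls only the \emph{asymptotic} decay rate, whereas dissipativity is an \emph{instantaneous} condition; closing this gap forces a renorming, and the genuine difficulty is making that renorming \emph{uniform} over the whole parameter space $\mathcal{Y}$ rather than ${\bf p}$-by-${\bf p}$. Moreover, if one insists on full $\omega$-accretivity of $-\mathcal{A}$ (comparing arbitrary states, not only comparison with the trivial state), the cross term $(\mathcal{A}_{E(u)}-\mathcal{A}_{E(v)})v$ arising from the dependence of the vital rates on the environment must be dominated by the dissipative gain $-\omega\|u-v\|$, which in general requires a Lipschitz control of ${\bf p}\mapsto\mathcal{A}_{\bf p}$ together with an a priori bound on trajectories; it is precisely this term that obstructs a routine argument and justifies stating the result as a conjecture. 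A further technical subtlety is the possible ${\bf p}$-dependence of the domains $\text{D}(\mathcal{B}_{\bf p})$ — e.g. through the growth rate in \eqref{operatorC} — which must be reconciled with the common bracket and the subtangential/flow-invariance conditions underlying the nonlinear semigroup estimate.
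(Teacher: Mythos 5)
The statement you were asked to prove is stated in the paper as a \emph{conjecture}: the paper offers no proof at all, only the heuristic remark that there is ``a strong connection between the dissipativity condition of a generator of a linear (quasi-)contraction semigroup and the accretive condition of the generator of a nonlinear contraction semigroup.'' Your proposal is an honest attempt to turn exactly that heuristic into an argument, so you are on the paper's intended route; but what you have written is a research programme, not a proof, and you say as much yourself. Since the task is to assess it as a proof, the verdict is that it has genuine gaps --- the same gaps, in fact, that presumably led the author to state the result as a conjecture rather than a proposition.

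Concretely: (i) the step from $\mathcal{R}({\bf p})<1-\varepsilon$ for all ${\bf p}$ to a uniform bound $s(\mathcal{A}_{\bf p})\le-\omega_0<0$ requires uniform-in-${\bf p}$ control of the map $\lambda\mapsto r\left(-\mathcal{C}_{\bf p}(\mathcal{B}_{\bf p}-\lambda I)^{-1}\right)$ and a uniform spectral gap $\inf_{\bf p}|s(\mathcal{B}_{\bf p})|>0$, neither of which follows from the hypotheses of the conjecture; (ii) the identification of spectral bound and growth bound for positive semigroups holds on $L^p$-spaces (Weis) but not on general Banach lattices, which is the paper's stated setting, so even ${\bf p}$-wise exponential stability of the $T_{\bf p}$ is not free; (iii) most seriously, your renorming $\|x\|_\sim=\sup_{t\ge0}e^{\omega t}\|T_{\bf p}(t)x\|$ is ${\bf p}$-dependent, and what the argument actually needs is a \emph{single} equivalent norm (a common Lyapunov functional) in which every $\mathcal{A}_{\bf p}+\omega I$ is dissipative. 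Uniform exponential stability of each member of a family of linear semigroups, with a common $M>1$ and $\omega_0>0$, does \emph{not} imply the existence of such a common norm: this is precisely the switched-systems phenomenon in which every frozen subsystem is exponentially stable yet state-dependent switching (here ${\bf p}=E(u(t))$) exploits the transient growth permitted by $M>1$ to produce trajectories that do not decay. Without the common norm, the one-sided inequality $\tfrac{d^+}{dt}\|u(t)\|\le-\omega\|u(t)\|$ along the nonlinear trajectory does not close, and the claimed $\omega$-accretivity of $-\mathcal{A}$ (hence the Crandall--Liggett type estimate) is unsupported; the differentiability issues for mild solutions that you mention compound this, since the bracket computation would have to be replaced by B\'{e}nilan-type integral-solution estimates, which again presuppose the very accretivity in question. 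In short: your outline correctly identifies the conjecture's central obstruction, but it does not overcome it, so the statement remains unproven.
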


The second conjecture may be interpreted as follows. If the net reproduction number is strictly less than $1$ for all possible environments (and in fact we have a uniform upper bound on it, which is less than $1$), then the population dies out, irrespective of the initial population density.

Finally we note that the ultimate (mathematical) advantage to recast and study the nonlinear model \eqref{nonlinproblem} in the form of a family of linear problems \eqref{problem} is that with the appropriate choice of the parameter space $\mathcal{Y}$,  and the environmental operator $E$, we may preserve the natural positivity properties of the original nonlinear problem, which are usually lost for example when approaching the nonlinear problem \eqref{nonlinproblem} with classical linearisation techniques. But we emphasize again that for a given nonlinear problem \eqref{nonlinproblem} typically there may be different ways to define the environmental operator $E$, which can then lead to different parametrisations \eqref{problem}, some of them  potentially more advantageous for concrete purposes.

\section{Further examples}

In this section we present further examples to illustrate the general ideas presented in the previous sections. In fact it is worth mentioning that the examples do not fit exactly into the framework presented in the previous sections. In particular first we consider an $SIS$ model, in which infected individuals are structured with respect to the bacterium/virus load they are carrying. We do not have a specific disease in mind, rather we consider a generic (toy model) to illustrate our approach. We note that a somewhat similar model, but with a super-linear term for the infectives, was analysed recently in \cite{CF3}. We consider the following coupled integro-differential equation model.
\begin{equation}\label{infection}
\begin{aligned}
i_t(x,t)+\left(\alpha(x)i(x,t)\right)_x &= -\varrho(x)i(x,t)+\int_0^\infty \Lambda(x,y)f(S(t))i(y,t)\,\ud y,\quad x\in(0,\infty), \\
S'(t) &=\int_0^\infty\varrho(x)i(x,t)\,\ud x-\int_0^\infty\int_0^\infty\Lambda(x,y)f(S(t))i(y,t)\,\ud y\,\ud x, \\
\alpha(0)i(0,t) &=0,\quad i(x,0)=i_0(x),\quad S(0)=S_0.
\end{aligned}
\end{equation}
In the model above $i$ stands for the density of infected individuals, while $S$ denotes the susceptible population. Furthermore, $\varrho$ denotes the recovery rate, $\alpha$ determines the rate at which bacteria/virus replicate inside an infected host individual thereby increasing their bacterial/viral load. $\Lambda$ determines the rate at which infected individuals pass on the disease to susceptibles upon contact. Newly infected individuals may have any bacterial/viral load in principle, hence we do not impose restrictions (other than regularity ones) on $\Lambda$. The function $f$ accounts for the fact that the number of contacts per unit time does not always linearly increase with the susceptible population size $S$. Note that in a standard  bilinear model we would simply have $f(S)\equiv S$, but in some applications it may be natural to assume that $f$ is sub or super-linear. It suffices to assume that the  model ingredients $\alpha,\varrho,\Lambda$ are positive and continuously differentiable, and also integrable on $(0,\infty)$. It is also natural to assume that $\alpha(\infty)=0$, and we assume that $f$ is positive and continuous. Note that there is no population dynamics incorporated in model \eqref{infection}, and indeed every solution $(i,S)$ satisfies 
\begin{equation*}
\frac{\ud}{\ud t} \left(\int_0^\infty i(x,t)\,\ud x+S(t)\right)=0,\quad t>0,
\end{equation*}
that is, the total population size remains constant for $t>0$. 
The natural choice of state space for model \eqref{infection} is $\mathcal{X}=L^1(0,\infty)\times\mathbb{R}$. 

It turns out that the most convenient parametrisation of \eqref{infection} arises when we define the environmental operator as 
\begin{equation}
E\,:\,\mathcal{X}\to\mathcal{Y}=\mathbb{R},\quad 
E \left(
\begin{array}{c}
i\\
S\\
\end{array}
\right)={\bf S}\in\mathbb{R}.
\end{equation} 
Note that previously we used ${\bf p}$ to denote the parameter, but here we emphasize that the parameter is actually the susceptible population size (and not some other function of $i$ and/or $S$), which is also a state variable, and hence we denote it by ${\bf S}$. 
We rewrite the nonlinear problem \eqref{infection} as the family of linear ones as 
\begin{equation}\label{infection-eq}
\frac{\ud i}{\ud t}=\left(\mathcal{B}+\mathcal{C}_{\bf S}\right)\,i,\quad i(0)=i_0,\quad {\bf S}\in\mathbb{R},
\end{equation} 
where we define the operators $\mathcal{B}$ and $\mathcal{C}_{\bf S}$ as follows
\begin{equation}\label{infection-param}
\begin{aligned}
\mathcal{B}\,i= & -\left(\alpha(\cdot)i\right)'-\varrho(\cdot)i, \quad \text{D}\left(\mathcal{B}\right)=\left\{i\in W^{1,1}\left(0,\infty\right)\,|\,i(0)=0\right\},   \\
\mathcal{C}_{\bf S}\,i= &f({\bf S})\,\int_0^\infty\Lambda(\cdot,y)i(y)\,\ud y, \quad \text{D}\left(\mathcal{C}_{\bf S}\right)=L^1(0,\infty). 
\end{aligned}
\end{equation}
That is, for any fixed value of ${\bf S}\in\mathbb{R}$, \eqref{infection-eq} is a (well-posed) linear model. 
Note that since the only nonlinearity in model \eqref{infection} arises due to the infection process, $\mathcal{B}$ naturally does not depend on ${\bf S}$. Also note that since the susceptible population size is chosen as the parameter ${\bf S}$, the second equation in \eqref{infection} becomes 'void' after the parametrisation. Note that in fact at a steady state the second equation of \eqref{infection} is simply the integral of the first one. It is not too difficult to verify that the operators $\mathcal{B}$ and $\mathcal{C}_{\bf s}$ satisfy the assumptions of Theorem \ref{spectralth} (see \cite{FGH} where such details are worked out for a similar model).  

With this setting, for any fixed ${\bf S}$, we can define the net reproduction number as the spectral radius of the (bounded) integral operator $\mathcal{L}_{\bf S}:=-\mathcal{C}_{\bf S}\,\mathcal{B}^{-1}$, which operator is given as 
\begin{equation}
\mathcal{L}_{\bf S}\,\phi = f({\bf S})\int_0^\infty\frac{\Lambda(\cdot,y)}{\alpha(y)}\int_0^y\exp\left\{-\int_r^y\frac{\varrho(\sigma)}{\alpha(\sigma)}\,\ud\sigma\right\}\phi(r)\,\ud r\,\ud y,\quad D(\mathcal{L}_{\bf S})=L^1(0,\infty).
\end{equation}
Therefore, we define the net reproduction function $\mathcal{R}$ (as a function of the susceptible population size) as
\begin{equation}
\mathcal{R}({\bf S}):=r\left(\mathcal{L}_{\bf S}\right).
\end{equation}
In this example the spectral radius of $\mathcal{L}_{\bf S}$, and therefore $\mathcal{R}$, behaves as $f$, as a function of the susceptible population size ${\bf S}$. For example in the classic bilinear case (i.e. when $f(S)\equiv S$) $\mathcal{R}$ is a monotone increasing function of ${\bf S}$, as expected. The spectral radius of $\mathcal{L}_{\bf S}$, and therefore $\mathcal{R}({\bf S})$, can be computed explicitly in some special cases, for example when $\Lambda$ is separable, in which case $\mathcal{L}_{\bf S}$ is of rank one. Also note that endemic steady states may arise at different values of ${\bf S}$, depending on the behaviour of the function $f$. At endemic steady states $(i_*,{\bf S}_*)$ we clearly have $\mathcal{R}({\bf S}_*)=1$, but in general $\mathcal{R}({\bf S})$ determines the expected number of secondary cases at any fixed susceptible population size ${\bf S}$. In fact the interested reader may recall from the celebrated paper \cite{DHM} the universal definition of the net reproduction number for infectious disease models. Indeed one may argue that it is already clear from \cite{DHM} (in particular see formulas (2.1)-(2.2) in \cite{DHM}), that $\mathcal{R}$ could be seen as a function of the variable ${\bf S}$.  

For the sake of (mainly mathematical) interest we mention that the general idea can be applied to some models incorporating explicitly time-dependent vital rates; although the information which can be inferred from the (time-dependent) net reproduction function is rather limited.
As an example consider the following linear, but non-autonomous version of the distributed states at birth model \eqref{distr-eq}. 
\begin{equation}\label{distr-eq-time} 
\begin{aligned}
u_t(s,t)+\left(\gamma(s,t)u(s,t)\right)_s&=-\mu(s,t)u(s,t)+\int_{s_{\text{min}}}^{s_{\text{max}}}\beta(s,\sigma,t)u(\sigma,t)\,\ud \sigma,\quad s\in(s_{\text{min}},s_{\text{max}}), \\
\gamma(s_{\text{min}},t)\,u(s_{\text{min}},t)&=0.
\end{aligned}
\end{equation}
It is clear that by fixing the time in the vital rates $\gamma,\beta,\mu$ we arrive at a parametrised family of linear evolution problems
\begin{equation}\label{distr-nonauto}
\frac{\ud u}{\ud t}=\left(\mathcal{B}_{\bf t}+\mathcal{C}_{\bf t}\right)\,u,\quad u(0)=u_0,\quad {\bf t}\in\mathbb{R},
\end{equation} 
where we define the linear operators $\mathcal{B}_{\bf t}$ and  $\mathcal{C}_{\bf t}$ as follows
\begin{equation}\label{nonauto}
\begin{aligned}
\mathcal{B}_{\bf t}\,u= & -\left(\gamma(\cdot,{\bf t})u\right)'-\mu(\cdot,{\bf t})u, \quad \text{D}\left(\mathcal{B}_{\bf t}\right)=\left\{u\in W^{1,1}\left(s_{\text{min}},s_{\text{max}}\right)\,|\,u(s_{\text{min}})=0\right\},   \\
\mathcal{C}_{\bf t}\,u= &\displaystyle\int_{s_{\text{min}}}^{s_{\text{max}}}\beta(\cdot,y,{\bf t})u(y)\,\ud y, \quad \text{D}\left(\mathcal{C}_{\bf t}\right)=\mathcal{X}. 
\end{aligned}
\end{equation}
It can be verified that under some natural assumptions on the model ingredients the operators $\mathcal{B}_{\bf t}$ and  $\mathcal{C}_{\bf t}$ satisfy the hypothesis of Theorem \ref{spectralth}. 
Similarly as earlier, for every fixed ${\bf t}$, the integral operator $\mathcal{L}_{\bf t}:=-\mathcal{C}_{\bf t}\,\mathcal{B}_{\bf t}^{-1}$ is given as
\begin{equation}
\mathcal{L}_{\bf t}\,\phi =\int_{s_{\text{min}}}^{s_{\text{max}}}\frac{\beta(\cdot,y,{\bf t})}{\gamma(y,{\bf t})}\int_0^y\exp\left\{-\int_r^y\frac{\mu(\sigma,{\bf t})}{\gamma(\sigma,{\bf t})}\,\ud\sigma\right\}\phi(r)\,\ud r\,\ud y,\quad D(\mathcal{L}_{\bf t})=L^1(s_{\text{min}},s_{\text{max}});
\end{equation}
and in turn we arrive at an explicitly time-dependent net reproduction function
\begin{equation}
\mathcal{R}({\bf t}):=r\left(\mathcal{L}_{\bf t}\right).
\end{equation}

\section{Perspectives}

In this note our goal was to show how to define the net reproduction function for wide classes of nonlinear structured population models. In particular, we have shown that this can be done by reformulating the original nonlinear problem as a family of linear ones, and then defining the net reproduction number for each of the linear problems using a recent spectral theoretic result from \cite{HT}. The unified approach we presented here to define the net reproduction function is important, as earlier results demonstrate that net reproduction functions play an important role in the qualitative analysis of structured population models, see e.g. \cite{FH,FGH,FH4,FH3} for further details.  

As we have already mentioned earlier, to actually compute the spectral radius of the appropriate integral operator, and to deduce an explicit formula for the net reproduction function is rather difficult, if not impossible, in general. For the infectious disease model \eqref{infection} we have indicated that it is possible to deduce an explicit formula for the case when the kernel function $\Lambda$ determining the rate of new infections is separable. 
The net reproduction function can be also computed for the distributed states at birth model, when the fertility function $\beta$ is separable, see e.g. \cite{AF,FGH} for more details. In general, numerical/computational tools may be employed, to obtain approximations of the function $\mathcal{R}$. For the distributed states at birth model \eqref{distr-eq} one possibility is to obtain a finite rank approximation of the operator $\mathcal{C}_{\bf p}$ defined in \eqref{operatorC}; for example by replacing the kernel function $\beta$ by the sum of finitely many separable ones.  We also refer the interested reader to \cite{Breda} for further details in this general direction.

On the other hand it is worthwhile to note that results on the continuity of a finite system of eigenvalues of closed operators (see e.g. Theorem 3.16 in \cite[Ch.IV]{Kato}) imply that in most cases (including all of the models we have studied so far) the net reproduction function $\mathcal{R}$, as defined in \eqref{netrep}, is a continuous function of its argument. 

Finally we mention that our focus here was on population dynamical models rather than on infectious disease models as in \cite{DHM}. 
Nevertheless we have provided an example of an SIS-type disease model \eqref{infection} to illustrate how our approach can be employed. 
For epidemiological models, deriving (or approximating) the net reproduction function (typically a function of the susceptible population size) may provide additional insight into the disease dynamics.

\section*{Acknowledgements}

I thank the Reviewer for several useful comments. I also thank \`Angel Calsina and Odo Diekmann for many insightful discussions.

%%-----------------------------
%%      your bibliography
%%-----------------------------
\end{document}